\newtheorem{theorem}{Theorem}[section]
\newenvironment{proof}[1][Proof.]{\begin{trivlist}
     \item[\hskip \labelsep {\bfseries #1}]}{\end{trivlist}}
\begin{document}

%\preprint{APS/123-QED}

\title{Potentials of the Heun class: the triconfluent case}% Force line breaks with \\

\author{D. Batic}
\email{davide.batic@uwimona.edu.jm}
\affiliation{%
Department of Mathematics,\\  University of the West Indies, Kingston 6, Jamaica 
}
\author{D. Mills-Howell}
\email{dominic.millz27@gmail.com}
\affiliation{%
Department of Mathematics,\\  University of the West Indies, Kingston 6, Jamaica 
}
\author{M. Nowakowski}
\email{mnowakos@uniandes.edu.co}
\affiliation{
Departamento de Fisica,\\ Universidad de los Andes, Cra.1E
No.18A-10, Bogota, Colombia
}%

\date{\today}% It is always \today, today,
             %  but any date may be explicitly specified

\begin{abstract}
Since the advent of quantum mechanics different approaches
to find analytical solutions of the Schr\"odinger equation have been
successfully developed. Here we follow and generalize
the approach pioneered by Natanzon and others by which the
Schr\"odinger equations can be transformed
into another well-known equation for transcendental function
(e.g., the hypergeometric equation). 
This sets a class of potentials for which this transformation is
possible.
Our generalization consists
in finding potentials allowing the transformation
of the Schr\"odinger equation into
a triconfluent Heun equation. We find the energy
eigenvalues of this class of potentials, the eigenfunction
and the exact superpartners. 
\end{abstract}

\pacs{XXX}% PACS, the Physics and Astronomy
                             % Classification Scheme.
%\keywords{Suggested keywords}%Use showkeys class option if keyword
                              %display desired
\maketitle

\section{Introduction}
The study of exactly solvable Schr\"{o}dinger equations can be traced back to the beginning of quantum mechanics. The earliest examples 
are represented by the harmonic oscillator, Coulomb, Morse, P\"{o}schl-Teller, Eckart and Manning-Rosen potentials \cite{Fl,Derezinski,Lam}. By 
looking closer at these examples one can start to conjecture that exact solvability of the Schr\"{o}dinger equation depends on the fact 
that such an equation can be suitably reduced to the hypergeometric or the confluent hypergeometric equation. The problem of deriving the 
most general class of potentials such that the Schr\"{o}dinger equation can be transformed into the hypergeometric equation has been 
solved by \cite{Natanzon}. Furthermore, \cite{Natanzon1} studied solutions regular at infinity for the basic SUSY ladder of Hyperbolic P\"{o}schl-Teller potentials that admit representations in terms of confluent Heun polynomials. \cite{Davide2013} derived new classes of potentials such that the one-dimensional Schr\"{o}dinger equation can 
be turned into the Heun equation and its confluent cases. The generalized Heun equation has been considered as well in \cite{Davide2013}. 
Since the hypergeometric equation 
\begin{equation}\label{dueI}
y(1-y)\frac{d^2 v}{dy^2}+\left[c-(a+b+1)y\right]\frac{dv}{dy}-abv(y)=0,\quad y\in I\subset\mathbb{R}
\end{equation}
with $a,b,c\in\mathbb{R}$ is a special case of the Heun equation 
\begin{equation}\label{HeunH}
\frac{d^2 v}{dy^2}+\left(\frac{\gamma}{y}+\frac{\delta}{y-1}+\frac{\epsilon}{y-a}\right)\frac{dv}{dy}
+\frac{\alpha\beta y-q}{y(y-1)(y-a)}v(y)=0,\quad y\in I\subset\mathbb{R},
\end{equation}
whenever $q=\alpha\beta a$ and $\epsilon=0$, the potential classes obtained in \cite{Davide2013} generalizes the Natanzon's class. We 
start by reviewing the method developed by \cite{Milson,Davide2013} allowing the construction of the most general potential such that the Schr\"{o}dinger 
equation can be reduced to the triconfluent Heun equation (\ref{TCHeq}). Such a potential contains six free parameters.  

\section{Reduction of the radial Schr\"{o}dinger equation to a triconfluent Heun equation}\label{red}
We consider a quantum particle in a central field in three and two spatial dimensions, respectively. In the three dimensional case the 
behaviour of the particle is described by the Schr\"{o}dinger equation ($\hbar^2=2m=1$)
\[
i\partial_t\Psi(t,{\bf{r}})=H\Psi(t,{\bf{r}}),\quad
H=-\Delta+U(r)
\]
with $\Psi\in L^2(\mathbb{R}^3)$. Here, $\Delta$ denotes the Laplacian. Since the Hamilton operator commutes with the angular momentum operator 
${\bf{L}}=(L_x,L_y,L_z)$, it is sufficient to solve the eigenvalue problem for $H$, i.e. the time-dependent Schr\"{o}dinger equation, on the 
subspace of $L^2(\mathbb{R}^3)$ spanned by a basis of common eigenvectors of $|{\bf{L}}^2|$ and $L_z$. Such a subspace is made of functions having 
the form $\Phi_{\ell m}(r,\theta,\phi)=\Phi(r) Y_{\ell m}(\theta,\phi)$ where $\Phi\in L^2(\mathbb{R}_{+},r^2 dr)$ and $Y_{\ell m}(\theta,\phi)$ 
denotes the spherical harmonics. In spherical coordinates the Laplace operator becomes
\[
\Delta=\frac{1}{r^2}\partial_r r^2\partial_r-\frac{|{\bf{L}}^2|}{r^2}.
\]
If $\Phi_{\ell m}$ is an eigenfunction of $H$ relative to the eigenvalue $E$, then $\Phi$ must satisfy the radial Schr\"{o}dinger equation 
\cite{messiah}
\[
\left[-\left(\frac{d^2}{dr^2}+\frac{2}{r}\frac{d}{dr}-\frac{\ell(\ell+1)}{r^2}\right)+U(r)\right]\Phi(r)=
E\Phi(r). 
\]
The above equation can be further simplified if we bring it to its canonical form by means of the transformation $\Phi(r)=\psi(r)/r$ with 
$\psi\in L^2(\mathbb{R}_{+},r^2 dr)$ and we obtain
\begin{equation}\label{SH1} 
\frac{d^2\psi}{dr^2}+\left[E-V_{eff}(r)\right]\psi(r)=0,\quad
V_{eff}(r)=\frac{\ell(\ell+1)}{r^2}+U(r).
\end{equation}
The region where the dynamics of the particle can take place is constrained to the half axis $r>0$. Moreover, it will be assumed that the effective 
potential  does not depend on the energy of the particle. Two-dimensional quantum systems appear in solid state physics in connection with the 
fractional quantum Hall effect \cite{Zhang} and high temperature superconductivity \cite{Fetter,Wiegmann,Polyakov}. For the two-dimensional case 
we consider a system of two anyons \cite{Lerda}, i.e. particles with fractional statistics, in a spherically-symmetric potential. The 
time-independent Schr\"{o}dinger equation governing this system is \cite{Manuel,Roy} ($\hbar^2=2\mu=1$ with $\mu$ the reduced mass of the system)
\[
\left[-\left(\frac{d^2}{dr^2}+\frac{1}{r}\frac{d}{dr}-\frac{\nu^2}{r^2}\right)+U(r)\right]\Phi(r)=E\Phi(r),\quad
\nu=m+\frac{\theta}{\pi}, 
\]
where $\theta$ is the statistics parameter, $m\in\mathbb{Z}$ is the azimuthal quantum number and $r=|{\bf{r}}_1-{\bf{r}}_2|>0$ is the relative 
coordinate. Note that $r\neq 0$ to avoid that the two particles overlap. By means of the substitution $\Phi(r)=\psi(r)/\sqrt{r}$ we get formally 
the same Schr\"{o}dinger equation as given by (\ref{SH1}) with 
\begin{equation}\label{anyon}
V_{eff}(r)=\frac{\nu^2-\frac{1}{4}}{r^2}+U(r). 
\end{equation}
We want to construct the most general potential $U$ such that the radial Schr\"{o}dinger equation (\ref{SH1}) with effective potential given by 
(\ref{SH1}) with $\ell=0$ or (\ref{anyon}) with fixed $\nu$ can be transformed into the triconfluent Heun equation \cite{Ronveaux}
\begin{equation}\label{TCHeq}
\frac{d^2v}{d\rho^2}+I(\rho)v(\rho) = 0,\quad
I(\rho)=A_0+A_1\rho+A_2\rho^2-\frac{9}{4}\rho^4
\end{equation}
with parameters $A_1,A_2,A_2\in\mathbb{R}$ and $\rho\in\Omega\subseteq\mathbb{R}$. Note that (\ref{TCHeq}) can be obtained by means of a confluence process of the singularities involved 
in the Heun equation. To appreciate the role of the TCH equation in physics, we refer to \cite{Hioe,Liang,Schulze,Bay} where this equation 
appears in the treatment of anharmonic oscillators in quantum mechanics. A short exposition of known results concerning anharmonic oscillators can be found in Ch. I of \cite{ush}. It is interesting to observe that the radial Schr\"{o}dinger equation 
(\ref{SH1}) reduces immediately to the above triconfluent Heun equation whenever  
\[
r=\rho,\quad \psi=v,\quad E=A_0,\quad\ell=0,\quad U(r)=-A_1 r-A_2 r^2+\frac{9}{4}r^4,
\]
i.e. $U$ is a quartic potential. In what follows we are interested in non trivial transformations of the dependent variable $\psi$ and the radial 
coordinate $r$ transforming (\ref{SH1}) into (\ref{TCHeq}). As in \cite{Davide2013} we introduce the coordinate transformation $\rho=\rho(r)$ 
with $r>0$ and (\ref{TCHeq}) becomes 
\[
\frac{d^2 v}{dr^2} - \frac{\rho^{''}}{\rho^{'}}\frac{dv}{dr} +(\rho^{'})^2 I(\rho(r))v(\rho(r)) = 0,\quad ^{'}:=\frac{d}{dr}. 
\]
The standard form of the above equation can be achieved by using the Liouvillle transformation
\[
v(\rho(r))= \mbox{exp}\left(\frac{1}{2}\int\frac{\rho^{''}(r)}{\rho^{'}(r)}~dr\right)\psi(r) = \sqrt{\rho^{'}(r)}\psi(r),
\]
where we must require that
\begin{equation}\label{condicio}
\rho^{'}(r)>0
\end{equation}
on the interval $(0,\infty)$, i.e. $\rho$ must be an increasing function of the variable $r$. Hence, we end up with the linear ODE
\begin{equation}\label{quasi-ODSE}
\frac{d^2\psi}{dr^2} + J(r)\psi(r) = 0,\quad
J(r) = (\rho^{'})^2 I(\rho(r)) + \frac{1}{2}S(\rho). 
\end{equation}
Here, $S(\rho)$ denotes the Schwarzian derivative of the coordinate transformation $\rho$ evaluated at $r$ and it is given by 
\[
S(\rho) = \frac{\rho^{'''}}{\rho^{'}}-\frac{3}{2}\left(\frac{\rho^{''}}{\rho^{'}}\right)^2.
\]
It can be easily checked that solutions of (\ref{quasi-ODSE})  will be expressed in terms of the solutions of (\ref{TCHeq}) according to  
\[
\psi(r) = \frac{1}{\sqrt{\rho^{'}(r)}}v(\rho(r)).
\]
Moreover, equation (\ref{quasi-ODSE}) will reduce to the radial Schr\"{o}dinger equation (\ref{SH1}) when $J(r) = E - V_{eff}(r)$. Therefore, 
the effective potential is completely determined by the Bose invariant  $I$ \cite{Bose} and the Schwarzian derivative of the coordinate 
transformation. In order to be sure that the potential does not depend on the energy of the particle, we must require that 
\begin{itemize}
\item
the Bose invariant admits a decomposition of the form  
\begin{equation}\label{cc1}
I(\rho)=I_0(\rho)+E I_1(\rho).
\end{equation}
According to Theorem~IV.6 in \cite{Davide2013}, this will be the case if the parameters entering in the triconfluent Heun equation can be written 
as $A_i=a_i+E b_i$ with $i=0,1,2$ and $a_i,b_i\in\mathbb{R}$ for any $i=0,1,2$. Then, we have
\[
I_0(\rho)=a_0+a_1\rho+a_2\rho^2-\frac{9}{4}\rho^4,\quad
I_1(\rho)=b_0+b_1\rho+b_2\rho^2. 
\]
\item
The coordinate transformation satisfies the differential equation 
\begin{equation}\label{cc2}
\left(\rho^{'}\right)^2 I_1(\rho(r))=1.
\end{equation}
The condition expressed by (\ref{condicio}) implies that we have to take the positive square root of (\ref{cc2}), that is
\begin{equation}\label{cc3}
\rho^{'}(r)=\frac{1}{\sqrt{I_1(\rho(r))}}. 
\end{equation}
\end{itemize}
With the help of (\ref{cc1}) and (\ref{cc2}) the effective potential can be written as
\[
V_{eff}(r)=-\frac{I_0(\rho(r))}{I_1(\rho(r))}-\frac{1}{2}S(\rho).
\]
Rewriting the Schwarzian derivative in terms of $I_1$ and its derivatives with respect to $\rho$ we obtain the most general form for the  
effective potential such that the radial Schr\"{o}dinger equation (\ref{SH1}) can be transformed into a triconfluent Heun equation, namely 
\begin{equation}\label{6}
V_{eff}(r) = -\frac{I_0(\rho(r))}{I_1(\rho(r))}+\frac{4I_1(\rho(r))\ddot{I_1}(\rho(r))-5\dot{I_1}^2(\rho(r))}{16I_1^3(\rho(r))},\quad 
\dot{}:=\frac{d}{d\rho}.
\end{equation} 
Replacing the corresponding expressions for $I_0$ and $I_1$ into (\ref{6}) we can write the effective potential as
\begin{equation}\label{8}
V_{eff}(r) = -\frac{4a_0+4a_1\rho+4a_2\rho^2-9\rho^4}{4(b_0+b_1\rho+b_2\rho^2)}
-\frac{12b_2^2\rho^2+12b_1b_2\rho+5b_1^2-8b_0 b_2}{16(b_0+b_1\rho+b_2\rho^2)^3},\quad \rho=\rho(r).
\end{equation}
Note that (\ref{8}) represents a class of potentials depending upon the six real parameters $a_i,b_i$ with $i=0,1,2$. Concerning the solution of 
the differential equation governing the coordinate transformation we need to require that $I_1(\rho(r))>0$ for $r\in(0,\infty)$. 
Let $\Delta=b_1^2-4b_0 b_2$. We have the following cases
\begin{enumerate}
 \item 
$b_0,b_1,b_2\neq 0$ and $b_2>0$. 
\begin{enumerate}
\item
If $\Delta<0$, the coordinate transformation $\rho$ maps the interval $(0,\infty)$ of the radial variable into the whole real line. Using $2.261$ 
and $2.262(1)$ in \cite{Gradsh} we find that the solution of (\ref{cc3}) can be written as
\[
r(\rho)=\frac{1}{2}\left(\rho+\frac{b_1}{2b_2}\right)\sqrt{I_1}-\frac{\Delta}{8b_2\sqrt{b_2}}
\mbox{arcsinh}{\left(\frac{2b_2\rho+b_1}{\sqrt{-\Delta}}\right)}. 
\]
As $\rho\to\infty$ we see that $r\approx(\sqrt{b_2}/2)\rho^2$.
\item
Let $\Delta=0$. Since $b_1^2>0$, $b_2>0$, and $b_1^2=4b_0 b_2$, then $b_0>0$ and $b_1=\pm 2\sqrt{b_0 b_2}$. Hence, 
$I_1=(\sqrt{b_2}\rho\pm\sqrt{b_0})^2$ and equation (\ref{cc3}) becomes
\[
\rho^{'}(r)=\frac{1}{\sqrt{b_2}\rho\pm\sqrt{b_0}},
\]
where the plus sign must be taken when $b_1>0$. Note that $\rho^{'}>0$ whenever $\rho>\mp\sqrt{b_0/b_2}$ where the minus sign must be taken 
for $b_1>0$. Integrating the above equation we obtain
\[
r(\rho)=\frac{\sqrt{b_2}}{2}\rho^2\pm\sqrt{b_0}\rho. 
\]
The above equation is quadratic in $\rho$ and therefore we can solve it in order to express $\rho$ as a function of the radial variable. In the 
case $b_1>0$ and requiring that $\rho$ is an increasing function of the radial variable we find that
\[
\rho(r)=\frac{-\sqrt{b_0}+\sqrt{b_0+2\sqrt{b_2}r}}{\sqrt{b_2}}. 
\]
In this case $\rho$ maps the interval $(0,\infty)$ into itself and the condition $\rho>-\sqrt{b_0/b_2}$ is automatically satisfied. If $b_1<0$, by 
a similar reasoning we find the solution
\[
\rho(r)=\frac{\sqrt{b_0}+\sqrt{b_0+2\sqrt{b_2}r}}{\sqrt{b_2}}. 
\]
The image of the interval $(0,\infty)$ under the transformation $\rho$ is the interval $(2\sqrt{b_0/b_2},\infty)$ and also in this case the condition 
$\rho>\sqrt{b_0/b_2}$ is satisfied. If $b_0=0$, then $\Delta=0$ implies $b_1=0$ and $\rho$ is given by the expression
\[
\rho(r)=\sqrt{\frac{2r}{\sqrt{b_2}}}. 
\]
\item
If $\Delta>0$, then $I_1>0$ on the interval $(-\infty,\rho_1)\cup(\rho_2,\infty)$ where $\rho_1$ and $\rho_2$ denote the roots of $I_1$. In this 
case we can express the radial variable in terms of $\rho$ as
\[
r(\rho)=\frac{1}{2}\left(\rho+\frac{b_1}{2b_2}\right)\sqrt{I_1}-\frac{\Delta}{8b_2\sqrt{b_2}}
\ln{\left(2\sqrt{b_2 I_1}+2b_2\rho+b_1\right)}. 
\]
\end{enumerate}
\item
$b_0,b_1,b_2\neq 0$ and $b_2<0$. In this case only if $\Delta>0$ we can make $I_1>0$ on an open interval $(\rho_1,\rho_2)$ where $\rho_1$ and 
$\rho_2$ denote the roots of $I_1$. Employing $2.261$ and $2.262(1)$ in \cite{Gradsh} yield
\[
r(\rho)=\frac{1}{2}\left(\rho+\frac{b_1}{2b_2}\right)\sqrt{I_1}+\frac{\Delta}{8b_2\sqrt{-b_2}}
\mbox{arcsin}\left(\frac{2b_2\rho+b_1}{\sqrt{\Delta}}\right). 
\]
\item
If $b_2=0$ and $b_1\neq 0$, then $I_1(\rho)=b_1\rho+b_0$ and we must require that $\rho>-b_0/b_1$. The solution of the ODE governing the coordinate 
transformation is
\[
r(\rho)=\frac{2}{3b_1}(b_1\rho+b_0)^{3/2}. 
\]
Moreover, we can express $\rho$ in terms of the radial coordinate as
\[
\rho(r)=-\frac{b_0}{b_1}+\frac{1}{b_1}\left(\frac{3b_1 r}{2}\right)^{2/3}. 
\]
We can observe that $\rho$ is an increasing function of $r$ and it maps the interval $(0,\infty)$ to $(-b_0/b_1,\infty)$. If $b_0=0$ the above 
expression simplifies to
\[
\rho(r)=\left(\frac{3}{2}\right)^{2/3}b_1^{-1/3}r^{2/3}. 
\]
\end{enumerate}
\section{Analysis of the potentials}
We analyze those potentials arising from the cases when the coordinate transformation $\rho$ can be written as an explicit function 
of the real variable $r$.
\subsection{The case $\Delta=0$, and $b_0,b_1,b_2>0$}
It is straightforward to verify that $I_1=b_0+2\sqrt{b_2}r$ and the effective potential can be 
written as
\begin{equation}\label{vi1}
V_{1}(r)=c_0+\frac{c_1}{\sqrt{b_0+2\sqrt{b_2}r}}+\frac{c_2}{b_0+2\sqrt{b_2}r}+\frac{c_3}{(b_0+2\sqrt{b_2}r)^2}
+c_4\sqrt{b_0+2\sqrt{b_2}r}+c_5r
\end{equation}
with
\[
c_0=\frac{63 b_0}{4b_2^2}-\frac{a_2}{b_2},\quad
c_1=\frac{2a_2\sqrt{b_0}}{b_2}-\frac{a_1}{\sqrt{b_2}}-\frac{9b_0\sqrt{b_0}}{b_2^2},\quad
c_2=-a_0+\frac{a_1\sqrt{b_0}}{\sqrt{b_2}}+\frac{9b_0^2}{4b_2^2}-\frac{a_2 b_0}{b_2},
\]
\[
c_3=-\frac{3}{4}b_2,\quad c_4=-\frac{9\sqrt{b_0}}{b_2^2},\quad c_5=\frac{9}{2b_2\sqrt{b_2}} 
\]
Note that $c_0=c_1=c_2=0$ if we choose
\[
a_0=9\left(\frac{b_0}{b_2}\right)^2,\quad
a_1=\frac{45}{2}\left(\frac{b_0}{b_2}\right)^{3/2},\quad
a_2=\frac{63b_0}{4b_2}.
\]
and the corresponding effective potential will depend only on the parameters $b_0$ and $b_2$. We observe that
\begin{enumerate}
\item
Since $r>0$ and $b_0>0$, it follows that the denominators appearing in (\ref{vi1}) can never vanish and therefore the potential is never singular 
on the open interval $(0,\infty)$.
\item
If $r=0$ the potential is finite there while asymptotically for $r\to\infty$ it grows linearly as
\[
V_{1}(r)\approx\frac{9r}{2b_2\sqrt{b_2}}.
\]
\item
There are choices of the parameters such that $V_1$ has at least a global minimum on the positive real line. See Fig.~\ref{fig1}.
\begin{figure}\label{F1}
\includegraphics[scale=0.5]{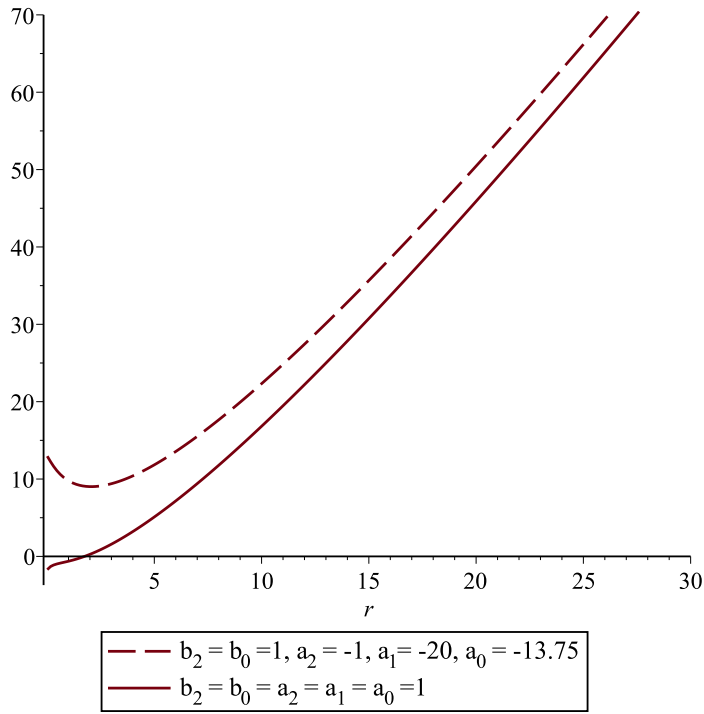}
\caption{\label{fig1}
Behaviour of the potential (\ref{vi1}) for different choices of the parameters.
}
\end{figure}
\end{enumerate}
The study of the critical points of this potential leads to a polynomial equation of degree eight. 
Applying Descartes' rule of signs we find that the potential (\ref{vi1}) has four, two, or no positive critical points whenever $c_1$ and $c_3$ are negative and $c_2$ is positive; three or one 
positive critical point if $c_1<0$ and $c_2,c_3>0$, and only one positive critical point if $c_1$, 
$c_2$, and $c_3$ are positive. For other choices of the signs of the coefficients $c_1$, 
$c_2$, and $c_3$ there are two or no positive critical point.

\subsection{The case $\Delta=0$, $b_0,b_2>0$, and $b_1<0$}
The function $I_1$ is formally given as in the previous case and the effective potential has the same form as (\ref{vi1}) with coefficients 
$d_0,\cdots,d_5$ given by
\[
d_0=c_0,\quad
d_1=-\frac{2a_2\sqrt{b_0}}{b_2}-\frac{a_1}{\sqrt{b_2}}+\frac{9b_0\sqrt{b_0}}{b_2^2},\quad
d_2=a_0-\frac{a_1\sqrt{b_0}}{\sqrt{b_2}}+\frac{9b_0^2}{4b_2^2}-\frac{a_2 b_0}{b_2},\quad
d_3=c_3,\quad d_4=-c_4,\quad d_5=c_5. 
\]
Note that $d_0=d_1=d_2=0$ if we choose
\[
a_0=-36\left(\frac{b_0}{b_2}\right)^2,\quad
a_1=-\frac{45}{2}\left(\frac{b_0}{b_2}\right)^{3/2},\quad
a_2=\frac{63b_0}{4b_2}.
\]
We observe that
\begin{enumerate}
\item
Since $r>0$ and $b_0>0$, it follows that $I_1$ can never vanish and therefore the potential is never singular.
\item
The potential is finite at $r=0$ while asymptotically for $r\to\infty$ it grows linearly as (\ref{vi1}). 
\item
There are again choices of the parameters such that the potential has at least one minimum.
\end{enumerate}

\subsection{The case $\Delta=0$, $b_2>0$, and $b_0=b_1=0$}
The potential takes the form
\begin{equation}\label{v3}
V_3(r)=e_0+\frac{e_1}{\sqrt{r}}+\frac{e_2}{r}+\frac{e_3}{r^2}+e_4 r
\end{equation}
where
\[
e_0=-\frac{a_2}{b_2},\quad
e_1=-\frac{a_1}{\sqrt{2} b_2^{3/4}},\quad
e_2=-\frac{a_0}{2\sqrt{b_2}},\quad
e_3=-\frac{3}{16},\quad
e_4=c_5.
\]
Note that it becomes singular at $r=0$ and asymptotically at infinity it behaves as (\ref{vi1}). There are choices of the parameters such that this potential has a maximum very close to $r=0$ and a minimum. See Fig.~\ref{fig11}. 
\begin{figure}\label{F11}
\includegraphics[scale=0.5]{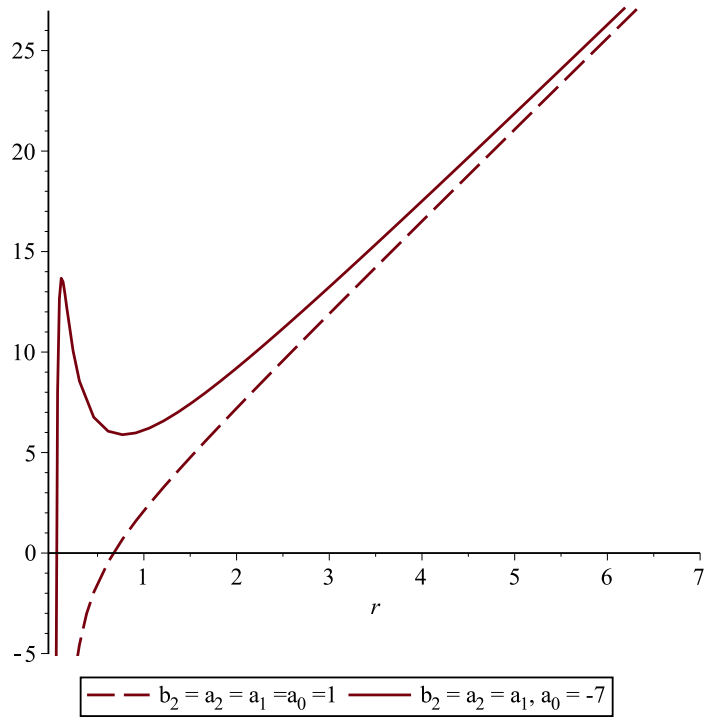}
\caption{\label{fig11}
Behaviour of the potential for the case $\Delta=0$, $b_0,b_2>0$, and $b_1<0$.
}
\end{figure}
The potential (\ref{v3}) is singular at $r=0$ and it grows linearly for $r\to\infty$. We have choices of the parameters such that the potential exhibits maxima and minima. If we make the substitution $r=\tau^2$, the critical points of the potential (\ref{v3}) must satisfy the sextic equation 
\begin{equation}\label{sextic1}
2e_4\tau^6-e_1\tau^3-2e_2\tau^2-4e_3=0.
\end{equation}
If $a_0$ and $a_1$ are both positive, there is no change of sign in (\ref{sextic1}) and Descartes' rule of signs implies that the sextic equation will not have any positive real root, and hence the potential has no extrema. For all other choices of the signs of the coefficients $a_0$, and $a_1$, there are only two sign changes in (\ref{sextic1}) 
and hence the potential will admit two positive critical points or none. For a complete root classification of a sextic equation we refer to \cite{Yang}. 

\subsection{The case $b_2=0$}
It can be easily checked that $I_1(r)=(3b_1 r/2)^{2/3}$ and the potential can be written as
\begin{equation}\label{int}
V_4(r)=v_0+\frac{v_1}{r^{2/3}}+\frac{v_2}{r^2}+v_3r^{2/3}+v_4r^{4/3}+v_5r^2
\end{equation}
with
\[
v_0=-\frac{a_1}{b_1}+\frac{2a_2 b_0}{b_1^2}-\frac{9b_0^3}{b_1^4},\quad
v_1=\left(\frac{2}{3b_1}\right)^{2/3}\left(\frac{a_1 b_0}{b_1}-a_0-\frac{a_2 b_0^2}{b_1^2}+\frac{9b_0^4}{4b_1^4}\right),\quad
v_2=-\frac{5}{36},
\]
\[
v_3=\frac{1}{b_1^{2/3}}\left(\frac{2}{3b_1}\right)^{2/3}\left[\frac{27}{2}\left(\frac{b_0}{b_1}\right)^2-a_2\right],\quad
v_4=-\frac{3^{10/3}b_0}{2^{4/3}b_1^{8/3}},\quad
v_5=\frac{81}{16 b_1^2}.
\]
We can set $v_0=v_1=v_3=0$ if we choose
\[
a_0=\frac{27}{4}\left(\frac{b_0}{b_1}\right)^4,\quad
a_1=18\left(\frac{b_0}{b_1}\right)^3,\quad
a_2=\frac{27}{2}\left(\frac{b_0}{b_1}\right)^2.
\]
The potential (\ref{int}) is singular at $r=0$ and it grows quadratically for $r\to\infty$. We have choices of the parameters such that the potential exhibits maxima and minima. If we make the substitution $r=\tau^{3/2}$, it is not difficult to verify that the critical points of the potential (\ref{int}) must satisfy the sextic equation 
\begin{equation}\label{sextic}
3v_5\tau^6+2v_4\tau^5+v_3\tau^4-v_1\tau^2-3v_2=0.
\end{equation}
If $v_1<0$ and $v_3,v_4>0$, we do not have any change of sign in (\ref{sextic}) and Descartes' rule of signs implies that the sextic equation will not have any positive real root, and therefore the potential has no extrema. If $v_1,v_2>0$ and $v_4<0$, there are four changes of sign and the potential may have four, two or no positive critical point. For all other choices of the signs of the coefficients $v_1$, $v_3$, and $v_4$ there are only two sign changes in (\ref{sextic}) 
and hence the potential will admit two positive critical points or none. In the subcase $b_2=b_0=0$ the potential (\ref{int}) simplifies to
\begin{equation}\label{V5}
V_5(r)=u_0+\frac{u_1}{r^{2/3}}+\frac{u_2}{r^2}+u_3r^{2/3}+u_4r^2
\end{equation}
where
\[
u_0=-\frac{a_1}{b_1},\quad u_1=-a_0\sqrt[3]{\frac{4}{9b_1^2}},\quad
u_2=v_2,\quad
u_3=-a_2\sqrt[3]{\frac{4}{9b_1^4}},\quad
u_4=\frac{81}{16 b_1^2}.
\]
\begin{figure}\label{F4}
\includegraphics[scale=0.5]{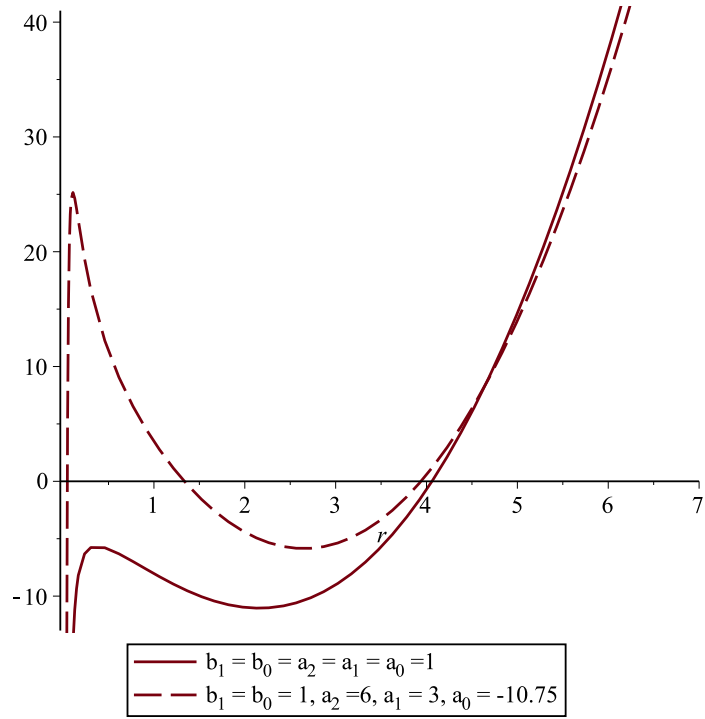}
\caption{\label{fig4}
Behaviour of the potential (\ref{int}) for different values of the parameters.
}
\end{figure}
If we make the substitution $r=\tau^{3/4}$, it is not difficult to verify that the critical points of the potential (\ref{V5}) must satisfy the cubic equation 
\begin{equation}\label{cubic}
3u_4\tau^3+u_3\tau^2-u_1\tau-3u_2=0.
\end{equation}
If $a_2<0$ and $a_0>0$, we do not have any change of sign in (\ref{cubic}) and Descartes' rule of signs implies that the cubic equation will not have any positive real root. Hence, the potential has no positive extrema. For any other choice of the signs of the parameters $a_2$ and $a_1$, we will always have two sign changes in (\ref{cubic}) and therefore the potential will admit two positive critical points or none.

\section{Solutions of the radial Schr\"{o}dinger equation}
We study the behaviour of the solutions of the radial Schr\"{o}dinger equation for the general class of potentials (\ref{8}) and for the potentials 
discussed in the previous section. First of all, by means of (\ref{cc3}) the solution of the radial Schr\"{o}dinger equation can be written as
\[
\psi(r)=\sqrt[4]{b_0+b_1\rho+b_2\rho^2}v(\rho),
\]
where $\rho=\rho(r)$ and $v$ is a solution of the triconfluent Heun equation (\ref{TCHeq}). It is in general convenient to bring (\ref{TCHeq}) into 
its canonical form \cite{Ronveaux} with the help of the substitution
\[
v(\rho)=e^{\frac{A_2}{3}\rho-\frac{\rho^3}{2}}y(\rho), 
\]
where the unknown function $y$ must satisfy the equation
\begin{equation}\label{puntoh}
\frac{d^2 y}{d\rho^2}-(\gamma+3\rho^2)\frac{dy}{d\rho}+[\alpha+(\beta-3)\rho]=0 
\end{equation}
with
\begin{equation}\label{abc}
\alpha=A_0+\frac{A_2^2}{9},\quad \beta=A_1,\quad \gamma=-\frac{2}{3}A_2. 
\end{equation}
Hence, the general solution of the radial Schr\"{o}dinger equation can be cast in the form
\begin{equation}\label{stellah}
\psi(r)=\sqrt[4]{b_0+b_1\rho+b_2\rho^2}e^{\frac{A_2}{3}\rho-\frac{\rho^3}{2}}y(\rho).
\end{equation}
Note that the function multiplying $y$ in (\ref{stellah}) decays exponentially in $r$ as $r\to+\infty$ because in the case $b_0,b_1,b_2\neq 0$ 
we have
\[
\rho(r)\approx 
\left\{ \begin{array}{ll}
         \sqrt{\frac{2r}{\sqrt{b_2}}} & \mbox{if $b_2 > 0$}\\
        \rho_2+\frac{1}{\sqrt[3]{|b_2|(\rho_2-\rho_1)}}\left(\frac{3r}{2}\right)^{2/3} & \mbox{if $b_2 < 0$}.\end{array}\right.
\]
with $\rho_1$ and $\rho_2$ denoting the roots of the quadratic polynomial $I_1$ introduced in Section~\ref{red} whereas if $b_2=0$ and $b_1\neq 0$ 
we get
\[
\rho(r)\approx\frac{1}{b_1}\left(\frac{3b_1 r}{2}\right)^{2/3}. 
\]
Since equation (\ref{TCHeq}) has no finite singular points, we can construct Taylor series for the solutions of the triconfluent Heun equation by 
adopting the method outlined in \cite{Ronveaux}. We first rewrite (\ref{puntoh}) in terms of the Euler operator $\delta=\rho d/d\rho$ as
\[
\delta^2 y-(1+\gamma\rho+3\rho^3)\delta y+(\beta-3)\rho^3 y=0. 
\]
Observing that $\rho^n$ with $n\in\mathbb{N}$ are eigenfunctions of the Euler operator, one finds the following two linearly independent solutions, 
namely
\begin{equation}\label{T1T2}
T_1(\alpha,\beta,\gamma;\rho)=\sum_{n=0}^\infty e_n(\alpha,\beta,\gamma)\rho^n,\quad
T_2(\alpha,\beta,\gamma;\rho)=\sum_{n=0}^\infty s_n(\alpha,\beta,\gamma)\rho^{n+1} 
\end{equation}
with
\[
e_0(\alpha,\beta,\gamma)=1,\quad e_1(\alpha,\beta,\gamma)=0,\quad e_2(\alpha,\beta,\gamma)=-\frac{\alpha}{2},
\]
\[
e_n(\alpha,\beta,\gamma)=\frac{(n-1)\gamma e_{n-1}(\alpha,\beta,\gamma)-\alpha e_{n-2}(\alpha,\beta,\gamma)
-(\beta+6-3n)e_{n-3}(\alpha,\beta,\gamma)}{n(n-1)},\quad n\geq 3
\]
and
\[
s_0(\alpha,\beta,\gamma)=1,\quad s_1(\alpha,\beta,\gamma)=\frac{\gamma}{2},\quad s_2(\alpha,\beta,\gamma)=\frac{\gamma^2-\alpha}{6},
\]
\[
s_n(\alpha,\beta,\gamma)=\frac{n\gamma s_{n-1}(\alpha,\beta,\gamma)-\alpha s_{n-2}(\alpha,\beta,\gamma)
-(\beta+3-3n)s_{n-3}(\alpha,\beta,\gamma)}{n(n+1)},\quad n\geq 3.
\]
We will refer to (\ref{T1T2}) as the general solution to contrast it with 
the polynomial one which will be discussed below. Worth mentioning is also
the convergence of (\ref{T1T2}) as shown in \cite{Ronveaux}.

In order to find polynomial solutions of (\ref{puntoh}) we start AGAIN by assuming a power solution of the form
\begin{equation}\label{dpuntih}
y(\rho)=\sum_{n=0}^\infty w_n\rho^n. 
\end{equation}
Substituting (\ref{dpuntih}) into (\ref{puntoh}) we obtain the following recurrence relation for the coefficients $w_n$
\begin{equation}\label{ric}
(\beta-3n)w_{n-1}+\alpha w_n-\gamma(n+1)w_{n+1}+(n+1)(n+2)w_{n+2}=0 
\end{equation}
which holds for all $n\geq 0$ provided that $w_{-1}=0$.
The above recursion relation is the same as the recursion relation for $e_n$ (see above) with a shift for $n$.
In order to have a polynomial solution of degree $N$ we require that $w_n=0$ for all 
$n>N$. Moreover, for $n=N+1$ the recurrence relation above gives the condition $\beta=3(N+1)$. Taking into account that $\beta=a_1+b_1 E$, such a 
condition gives the energy eigenvalue
\[
E_N=\frac{3(N+1)-a_1}{b_1},\quad b_1\neq 0
\]
with $N=0,1,2,\cdots$. Furthermore, for $0\leq n\leq N$ the recursion formula (\ref{ric}) generates a system of $N+1$ homogenous linear equations for the coefficients 
$w_0,w_1,\cdots,w_N$. This system admits a non-trivial solution if and only if the determinant of the $(N\times 1)\times(N\times 1)$ matrix 
\[
D_{N+1}= \begin{pmatrix}
  \alpha     &     -\gamma    &    2     &     0    &    0     &    \cdots   &         &         &              &         0  \\
  3N         &      \alpha    & -2\gamma & 2\cdot 3 &    0     &    \cdots   &         &         &              &            \\
   0         &     3(N-1)     &  \alpha  & -3\gamma & 3\cdot 4 &    \cdots   &         &         &              &            \\
\vdots       & \vdots         &\vdots    &\vdots    &\vdots    &    \ddots   &         &         &              &     \vdots \\
             &                &          &          &          &             &3\cdot 3 & \alpha  & -(N-1)\gamma &    N(N-1)  \\
             &                &          &          &          &             & 0       &3\cdot 2 & \alpha       & -N\gamma   \\
   0         &                &          &          &          &   \cdots    & 0       &0        &3\cdot 1      & \alpha
 \end{pmatrix}
\]
does not vanish. The first polynomials are given by 
\begin{itemize}
 \item 
\underline{$N=0$}: $\beta=3$, $\alpha=0$ and $p_0(\rho)=1$;
\item
\underline{$N=1$}: $\beta=6$, $\alpha^2+3\gamma=0$ and $p_1(\rho)=\rho-\alpha/3$;
\item
\underline{$N=2$}: $\beta=9$, $\alpha^3+12\alpha\gamma+36=0$ and
\[
p_2(\rho)=\rho^2-\frac{\alpha}{3}\rho+\frac{\alpha^2}{36}-\frac{1}{\alpha}; 
\]
\item
\underline{$N=3$}: $\beta=12$, $\alpha^4+30\alpha^2\gamma+216\alpha+81\gamma^2=0$ and
\[
p_3(\rho)=\rho^3-\frac{\alpha}{3}\rho^2+\left(\frac{\gamma}{2}+\frac{\alpha^2}{18}\right)\rho-\frac{\alpha^3}{162}-\frac{7}{54}\alpha\gamma-\frac{2}{3}; 
\]
\item
\underline{$N=4$}: $\beta=15$, $\alpha^5+60\alpha^3\gamma+756\alpha^2+576\alpha\gamma^2+5184\gamma=0$ and
\[
p_4(\rho)=\rho^4-\frac{\alpha}{3}\rho^3+\left(\frac{\alpha^2}{18}+\frac{2}{3}\gamma\right)\rho^2-\left(\frac{\alpha^3}{162}+\frac{5}{27}\alpha\gamma
+\frac{4}{3}\right)\rho+\frac{\alpha^4}{1944}+\frac{2}{81}\alpha^2\gamma+\frac{5}{18}\alpha+\frac{\gamma^2}{9}. 
\]
\end{itemize}
Note that the above list extends the one given in \cite{Ronveaux}. The special case $b_1=0=b_2$ and $b_0=1$ and in particular the boundary value problem
\begin{equation}\label{ep}
-\frac{d^2 v}{d\rho^2}+P(\rho)v=Ev,\quad v(-\infty)=v(+\infty)=0,\quad P(\rho)=\frac{9}{4}\rho^4-a_2\rho^2-a_1\rho-a_0
\end{equation}
has been considered by \cite{Eremenko1}. Note that the boundary condition is equivalent to the requirement $v\in L^2(\mathbb{R})$. According to \cite{Berezin,Sibuya} the spectrum of the above problem is discrete, all eigenvalues are real and simple and they can be arranged into an increasing sequence $E_0<E_1<\cdots$. Moreover, for $n\to\infty$ the asymptotic behaviour of the eigenvalues is \cite{Eremenko1}
\[
E_n\approx
\left[\frac{4\pi n}{2B(3/2,1/4)}\right]^{4/3}=\left[\frac{3\Gamma^2(3/4)}{\sqrt{2\pi}} n\right]^{4/3},
\]
where $B$ denotes the Euler function. Furthermore,  all non-real zeros of the eigenfunctions of the problem (\ref{ep}) belong to the imaginary axis (see Theorem~1 in \cite{Eremenko2}). 

We conclude this section by deriving a formula for the solution of the recurrence relation (\ref{ric}). By shifting the index according 
to the prescription $n\to n+1$ we can rewrite (\ref{ric}) as a third order homogeneous linear difference equation
\begin{equation}\label{cuadro}
w_{n+3}=-\pi_1(n,\gamma) w_{n+2}-\pi_2(n,\alpha)w_{n+1}-\pi_3(n,\beta)w_n
\end{equation}
with $n\geq -2$ and
\[
\pi_1(n,\gamma)=-\frac{\gamma}{n+3},\quad\pi_2(n,\alpha)=\frac{\alpha}{(n+2)(n+3)},\quad \pi_3(n,\beta)=\frac{\beta-3(n+1)}{(n+2)(n+3)},\quad
w_{-2}=w_{-1}=0.
\]
By letting $n=0$ we obtain $w_3$ in terms of $w_2,w_1$, and $w_0$, namely
\[
w_3=\frac{\gamma}{3}w_2-\frac{\alpha}{2\cdot 3}w_1-\frac{\beta-3}{2\cdot 3}w_0.
\]
Furthermore, for $n=-2$ and $n=-1$ we get, respectively
\[
w_1=\gamma w_0,\quad w_2=\frac{1}{2}(\gamma^2-\alpha)w_0
\]
from which it follows that
\[
w_3=\frac{1}{3!}\left[\gamma^3-2\alpha\gamma-(\beta-3)\right]w_0,\quad
w_4=\frac{1}{4!}\left[\gamma^4-3\alpha\gamma^2+\alpha^2-3\gamma(\beta-5)\right]w_0.
\]
Hence, $w_n$ can be evaluated once  $w_0$ is assigned. Note also that solving (\ref{cuadro}) under the requirement $w_{-2}=w_{-1}=0$ is equivalent to solve the same recurrence 
relation subject to the initial conditions
\begin{equation}\label{tildazza}
w_0=1,\quad w_1=\gamma,\quad w_2=\frac{\gamma^2-\alpha}{2}.
\end{equation}
Since our difference equation is linear and of third order, it will have a unique solution which can be expressed as a linear combination of three linear independent solutions 
$w_n^{(1)}$, $w_n^{(2)}$, and $w_n^{(3)}$. Once we find by other means these three particular solutions, we can check their linear independence by verifying that the Casoratian 
$W_n$ of the solutions $w_n^{(1)}$, $w_n^{(2)}$, and $w_n^{(3)}$ defined by
\[
W_n=\mbox{det}
\left(
\begin{array}{ccc}
w_n^{(1)} & w_n^{(2)}& w_n^{(3)}\\
w_{n+1}^{(1)} & w_{n+1}^{(1)}& w_{n+1}^{(1)}\\
w_{n+2}^{(1)} & w_{n+2}^{(1)}& w_{n+2}^{(1)}
\end{array}
\right)
\]
does not vanish for any value of $n$. On the other hand, Abel's lemma allows us to compute the Casoratian even without knowing explicitly the solutions $w_n^{(i)}$ with 
$i=1,2,3$. Applying $(2.2.10)$ in \cite{Elaydi} we find
\[
W_n=(-)^{3n}\prod_{k=0}^{n-1}\pi_3(k,\beta)W_0=\frac{2(-)^{3n}(n+2)}{(n+2)!^2}(\beta-3)(\beta-6)\cdots(\beta-3n)W_0.
\]
It is clear that we will have three linearly independent solutions provided that $\beta\neq 3(k+1)$ for all $k\in\mathbb{N}$ and $W_0\neq 0$. If this is the case, then the general solution of (\ref{cuadro}) can be written as
\[
w_n=C_1 w_n^{(1)}+C_2 w_n^{(2)}+C_3 w_n^{(3)}
\]
with constants $C_1$, $C_2$, and $C_3$ determined by the initial condition (\ref{tildazza}). In order to solve the recurrence relation (\ref{cuadro}) we first rewrite it as a system 
of first order equations of dimension $3$, namely
\begin{equation}\label{stelladav}
Z_{n+1}=A(\alpha,\beta,\gamma;n)Z_n
\end{equation}
with
\[
Z_n=\left(
\begin{array}{c}
Z^{(1)}_n\\
Z^{(2)}_n\\
Z^{(3)}_n
\end{array}
\right)=\left(
\begin{array}{c}
w_n\\
w_{n+1}\\
w_{n+2}
\end{array}
\right),\quad
A(\alpha,\beta,\gamma;n)=\left(
\begin{array}{ccc}
0&1&0\\
0&0&1\\
-\pi_3(n,\beta) & -\pi_2(n,\alpha) & -\pi_1(n,\gamma)
\end{array}
\right),
\]
where $A$ is the so-called companion matrix of the recurrence relation (\ref{cuadro}). Clearly, this matrix is non-singular if its determinant does not vanish and this condition translates into the requirement $\beta\neq 3(n+1)$ for any $n\in\mathbb{N}$. We want to solve (\ref{stelladav}) subject to the initial condition
\[
Z_0=\left(
\begin{array}{c}
w_1\\
w_{1}\\
w_{2}
\end{array}
\right)=\left(
\begin{array}{c}
1\\
\gamma\\
\frac{\gamma^2-\alpha}{2}
\end{array}
\right).
\]
By Theorem~$3.4$ in \cite{Elaydi} this initial value problem has a unique solution $U_{n}$ such that $U_0=Z_0$. From (\ref{stelladav}) we have
\[
U_1=A(\alpha,\beta,\gamma;0)Z_0,\quad U_2=A(\alpha,\beta,\gamma;1)Z_1=A(\alpha,\beta,\gamma;1)A(\alpha,\beta,\gamma;0)Z_0
\]
and by induction we conclude that
\[
U_n=\left(\prod_{k=0}^{n-1}A(\alpha,\beta,\gamma;k)\right)Z_0,
\]
where
\[
\prod_{k=0}^{n-1}A(\alpha,\beta,\gamma;k)=\left\{ \begin{array}{ll}
        A(\alpha,\beta,\gamma;n-1)A(\alpha,\beta,\gamma;n-2)\cdots A(\alpha,\beta,\gamma;0) & \mbox{if $n \geq 0$}\\
        \mathbb{I}_3 & \mbox{if $n= 0$}.\end{array} \right. .
\]
Even though the above formula contains the solution of the recurrence relation (\ref{cuadro}), it is difficult to extract from it the behaviour of the solution $w_n$ as $n\to\infty$. The next result circumvents this problem by investigating the asymptotic behaviour of $w_n$ with the help of the so-called Z-transform method which reduces the study of a linear 
difference equation to an examination of a corresponding complex function. 
\begin{theorem}\label{TM}
For the recurrence relation (\ref{cuadro}) with the initial condition (\ref{tildazza}) it holds
\[
\lim_{n\to\infty}w_n=0.
\]
\end{theorem}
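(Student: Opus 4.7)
The plan is to recognize that the sequence $\{w_n\}$ generated by (\ref{cuadro}) with the initial data (\ref{tildazza}) consists precisely of the Taylor coefficients of an entire solution of the triconfluent Heun equation in canonical form (\ref{puntoh}), so that the decay $w_n\to 0$ follows at once from Cauchy--Hadamard. The alternative, strictly sequential route via the Z-transform (or equivalently Perron--Poincar\'e type theorems applied to (\ref{cuadro})) is available but amounts to the same content, since the $z$-transform of $\{w_n\}$ is, up to a change of variable, the very generating function $y(\rho)=\sum w_n\rho^n$.

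Concretely, I would first show that the function $y(\rho)=\sum_{n\ge 0}w_n\rho^n$ built from the solution of (\ref{cuadro})--(\ref{tildazza}) coincides with the unique solution of (\ref{puntoh}) determined by $y(0)=1$, $y'(0)=\gamma$. Indeed, the derivation of (\ref{ric}) shows that a formal power series satisfies (\ref{puntoh}) if and only if its coefficients satisfy (\ref{ric}); shifting the index reproduces (\ref{cuadro}), and the initial values $w_0=1$, $w_1=\gamma$ together with the recurrence force $w_2=(\gamma^2-\alpha)/2$, matching (\ref{tildazza}). In terms of the fundamental pair (\ref{T1T2}) this identifies
\[
y(\rho)=T_1(\alpha,\beta,\gamma;\rho)+\gamma\,T_2(\alpha,\beta,\gamma;\rho).
\]

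Next, I would invoke the convergence property of $T_1$ and $T_2$ cited from \cite{Ronveaux}: both series converge for every $\rho\in\mathbb{C}$, so $y$ is an entire function. Hence the Taylor series $\sum w_n\rho^n$ has infinite radius of convergence, and Cauchy--Hadamard yields
\[
\limsup_{n\to\infty}|w_n|^{1/n}=0,
\]
which implies $|w_n|\to 0$ (in fact faster than any geometric rate). The degenerate case $\beta=3(N+1)$ is trivial because $w_n$ terminates in $N$ steps.

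The only real obstacle is the clean identification of $y$ as a specific solution of (\ref{puntoh}) and the justification that the stated convergence of $T_1,T_2$ from \cite{Ronveaux} is uniform on compacta (so that termwise identification of the Taylor coefficients is legitimate); this is standard, since (\ref{puntoh}) has entire coefficients and therefore every local solution extends to an entire one. If one prefers to avoid the analytic detour, the same decay can be read directly from (\ref{cuadro}) via a Perron--Kreuser argument: the characteristic equation of the leading-order piece is $\lambda^3+\frac{3(n+1)}{(n+2)(n+3)}=O(n^{-1})$, giving $|w_{n+3}/w_n|=O(1/n)$, whence $w_n\to 0$; but this shortcut still requires ruling out the minimal-solution anomalies that the analytic argument handles for free.
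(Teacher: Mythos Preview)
Your argument is correct and takes a genuinely shorter route than the paper. The paper works through the Z-transform explicitly: it sets $\widetilde{x}(z)=\sum_{j\ge 0}w_j z^{-j}$, derives the second-order ODE (\ref{equazza}) for $\widetilde{x}$, shows that the inhomogeneity vanishes precisely under the initial data (\ref{tildazza}), then substitutes $\omega=1/z$ to recover the canonical triconfluent Heun equation (\ref{puntoh}), writes $\widetilde{x}$ in terms of $T_1$ and a second solution (handling $\gamma=0$ separately via Abel's formula and an explicit integrating factor), and finally invokes the Final Value Theorem $\lim_n w_n=\lim_{z\to 1}(z-1)\widetilde{x}(z)=0$. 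You short-circuit all of this by observing at the outset that the ordinary generating function $y(\rho)=\sum_n w_n\rho^n$ \emph{is} the solution $T_1+\gamma T_2$ of (\ref{puntoh}) with $y(0)=1$, $y'(0)=\gamma$; since (\ref{puntoh}) has polynomial coefficients with unit leading coefficient, every solution is entire, and Cauchy--Hadamard gives $\limsup|w_n|^{1/n}=0$. The two proofs are really the same identification read from opposite ends---as you note, the Z-transform after $z\mapsto 1/z$ is exactly your generating function---but your version avoids the Final Value Theorem, the case split on $\gamma$, and the construction of the second solution, and it delivers for free the super-geometric decay that the paper only recovers afterwards through the Birkhoff asymptotics. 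One small slip worth fixing: the condition $\beta=3(N+1)$ alone does not force termination of $\{w_n\}$; the additional algebraic constraint on $\alpha,\gamma$ (the determinant condition accompanying the polynomials $p_N$) is also required. This is harmless here, since your entirety argument needs no case distinction at all.
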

\begin{proof}
Applying the definition of the Z-transform to our recurrence relation (\ref{cuadro}), i.e.
\[
\widetilde{x}(z)=Z(w_n)=\sum_{j=0}^\infty\frac{w_j}{z^j},\quad z\in\mathbb{C}\backslash\{0\},
\]
we find with the help of the properties of the Z-transform listed in Ch.~$6$ in \cite{Elaydi} that the original recurrence relation gives rise to the following second order, linear, non homogeneous complex differential equation
\begin{equation}\label{equazza}
z^5\frac{d^2\widetilde{x}}{dz^2}+z(2z^3+\gamma z^2+3)\frac{d\widetilde{x}}{dz}+(\alpha z+\beta-3)\widetilde{x}=(2w_2-\gamma w_1+\alpha w_0)z.
\end{equation}
The initial condition (\ref{tildazza}) implies that $2w_2-\gamma w_1+\alpha w_0=0$ and therefore, we are left with the following homogeneous differential equation
\begin{equation}\label{ascia}
z^5\frac{d^2\widetilde{x}}{dz^2}+z(2z^3+\gamma z^2+3)\frac{d\widetilde{x}}{dz}+(\alpha z+\beta-3)\widetilde{x}=0.
\end{equation}
By means of the substitution $\omega=1/z$ the above equation can be transformed into the triconfluent Heun equation (\ref{puntoh}). Hence, a particular solution is given by 
$\widetilde{x}_1(\omega)=T_1(\alpha,\beta,\gamma;\omega)$ with $T_1$ defined as in (\ref{T1T2}). In order to construct a linearly independent set we make use of Preposition~
$4.2$ with $j=1$ in \cite{Decarreau}. This gives
\[
\gamma T_2(\alpha,\beta,\gamma;\omega)+T_1(\alpha,\beta,\gamma;\omega)=e^{\omega^3+\gamma\omega}T_1(\alpha,-\beta,\gamma;-\omega)
\]
with $T_2$ defined as in (\ref{T1T2}). Hence, the general solution of (\ref{ascia}) for $\gamma\neq 0$ is
\[
\widetilde{x}(z)=C_1 T_1(\alpha,\beta,\gamma;1/z)+C_2 e^{\frac{\gamma z^2+1}{z^3}}T_1(\alpha,-\beta,\gamma;-1/z),
\]
where the corresponding series defining $T_1$ converges uniformly provided that $z\neq 0$. Then, the Final Value Theorem for the Z-transform \cite{Elaydi} implies that
\[
\lim_{n\to\infty}w_n=\lim_{z\to 1}(z-1)\widetilde{x}(z)=0.
\]
If $\gamma=0$, we can still use the particular solution $\widetilde{x}_1$ found before together with Abel's formula. Let $\widetilde{x}_2$ denote another particular solution 
of the triconfluent Heun equation arising from (\ref{equazza}) after the transformation $\omega=1/z$. Then, the Wronskian is given by $W(T_1(\alpha,\beta,0;\omega);\widetilde{x}_2(\omega))=C\mbox{exp}(\omega^3)$ for some $C\neq 0$. Employing the definition of the Wronskian we end up with the following first order, linear differential equation for 
$\widetilde{x}_2$, namely
\[
\frac{d\widetilde{x}_2}{d\omega}-\frac{T_1^{'}(\alpha,\beta,0;\omega)}{T_1(\alpha,\beta,0;\omega)}\widetilde{x}_2=C\frac{e^{\omega^3}}{T_1(\alpha,\beta,0;\omega)}.
\]
The corresponding integrating factor is given by $\mu(\omega)=T_1^{-1}(\alpha,\beta,0;\omega)$ and hence
\[
\widetilde{x}_2(\omega)=T_1(\alpha,\beta,0;\omega)\int\frac{e^{\omega^3}}{T_1^2(\alpha,\beta,0;\omega)}~d\omega.
\]
Finally, the general solution of (\ref{ascia}) for $\gamma=0$ is represented by
\[
\widetilde{x}(z)=T_1(\alpha,\beta,0;1/z)\left(D_1 +D_2\int\frac{e^{1/z^3}}{z^2T_1^2(\alpha,\beta,0;1/z)}~dz\right).
\]
Taking into account that the following expansion holds for $|z-1|<1$
\[
\int\frac{e^{1/z^3}}{z^2T_1^2(\alpha,\beta,0;1/z)}~dz=\frac{e(z-1)}{T_1^2(\alpha,\beta,0;1)}+\mathcal{O}(z-1)^2,
\]
once again the Final Value Theorem implies that $\lim_{z\to 1}(z-1)\widetilde{x}(z)=0$ and this completes the proof.~~$\square$
\end{proof}
We conclude the analysis of the recurrence relation (\ref{cuadro}) by constructing asymptotic expansions valid as $n\to\infty$ consisting of an exponential leading term 
multiplied by a descending series. These kind of expansions are called Birkhoff series \cite{Birkhoff1,Birkhoff2}. To construct such series we will assume as in \cite{Wimp} that 
\begin{equation}\label{w1}
w_n=\mathcal{E}_n K_n,\quad 
\mathcal{E}_n=e^{\mu_0 n\ln{n}+\mu_1 n}n^\theta,\quad 
K_n=e^{\alpha_1 n^{\widetilde{\beta}}+\alpha_2 n^{\widetilde{\beta}-1/\widetilde{\rho}}+\cdots}
\end{equation}
with $\alpha_1\neq 0$, $\widetilde{\beta}=j/\widetilde{\rho}$, $0\leq j<\rho$, $\rho\geq 1$ and $\mu_0\rho$ taking integer values. Then, for $k=1,2,3$ we have
\begin{equation}\label{w2}
\frac{w_{n+k}}{w_n}=n^{\mu_0 k}\lambda^k\left[1+\frac{A}{n}+\frac{B}{n^2}+\frac{C}{n^3}+\cdots\right]e^{\alpha_1\widetilde{\beta}k n^{\widetilde{\beta}-1}+
\alpha_2\left(\widetilde{\beta}-\frac{1}{\widetilde{\rho}}\right)kn^{\widetilde{\beta}-(1/\widetilde{\rho})-1}+\cdots},\quad 
\lambda=e^{\mu_0+\mu_1}.
\end{equation}
where
\begin{eqnarray*}
A&=&k\theta +\frac{\mu_0 k^2}{2},\\
B&=&\frac{\mu_0^2 k^4}{8}-\frac{\mu_0 k^3}{6}+\frac{1}{2}\mu_0 k^3\theta+\frac{k^2}{2}\theta(\theta-1),\\
C&=&\frac{\mu_0^3 k^6}{48}-\frac{\mu_0^2 k^5}{12}+\frac{\mu_0 k^4}{12}+\theta k\left(\frac{\mu_0^2 k^4}{8}-\frac{\mu_0 k^3}{6}\right)
+\frac{\mu_0 k^4}{4}\theta(\theta-1)+\frac{k^3}{6}\theta(\theta-1)(\theta-2).
\end{eqnarray*}
To construct Birkhoff series for the recurrence relation (\ref{cuadro}) it turns out to be convenient to rewrite it as
\[
(\beta-3-3n)w_n+\alpha w_{n+1}-\gamma(n+2)w_{n+2}+(n+2)(n+3)w_{n+3}=0.
\]
Substituting (\ref{w1}) and (\ref{w2}) into the above equation, we find the equation to be formally satisfied is
\[
\beta-3-3n+\alpha\lambda n^{\mu_0}\left[1+\frac{\theta+\mu_0/2}{n}+\mathcal{O}(n^{-2})\right]e^{\alpha_1\widetilde{\beta} n^{\widetilde{\beta}-1}+
\alpha_2\left(\widetilde{\beta}-\frac{1}{\widetilde{\rho}}\right)n^{\widetilde{\beta}-(1/\widetilde{\rho})-1}+\cdots}
\]
\[
-\gamma\lambda^2 n^{2\mu_0+1}\left[1+\frac{2(\theta+\mu_0+1)}{n}+\mathcal{O}(n^{-2})\right]e^{2\alpha_1\widetilde{\beta} n^{\widetilde{\beta}-1}+
2\alpha_2\left(\widetilde{\beta}-\frac{1}{\widetilde{\rho}}\right)n^{\widetilde{\beta}-(1/\widetilde{\rho})-1}+\cdots}
\]
\[
+\lambda^3 n^{3\mu_0+2}\left[1+\frac{3\theta+\frac{9}{2}\mu_0+5}{n}+\mathcal{O}(n^{-2})\right]e^{3\alpha_1\widetilde{\beta} n^{\widetilde{\beta}-1}+
3\alpha_2\left(\widetilde{\beta}-\frac{1}{\widetilde{\rho}}\right)n^{\widetilde{\beta}-(1/\widetilde{\rho})-1}+\cdots}=0.
\]
Obviously, it must be $\mu_0=-1/3$ and this implies $\lambda=3^{1/3}$. This further requires that $\widetilde{\rho}=3$ and $\widetilde{\beta}=1/3$. Expanding the exponentials 
gives
\[
(3\alpha_1-3^{2/3}\gamma)n^{1/3}+\beta-3+3\left(3\theta+\frac{7}{2}\right)+(\mbox{lower order terms})=0.
\]
Thus, $\alpha_1=\gamma/3^{1/3}$ and $\theta=-5/6-\beta/9$. Hence, the first asymptotic series is
\[
w^{(1)}_n\sim\left(\frac{3e}{n}\right)^{n/3}n^{-\frac{5}{6}-\frac{\beta}{9}}e^{\gamma\sqrt[3]{\frac{n}{3}}}
\left[1+\frac{c_1}{n^{1/3}}+\frac{c_2}{n^{2/3}}+\frac{c_3}{n}+\mathcal{O}(n^{-4/3})\right]
\]
with
\begin{eqnarray*}
c_1&=&\frac{1}{\sqrt[3]{9}}\left(\alpha-\frac{\gamma^2}{6}\right),\\
c_2&=&\frac{1}{2\sqrt[3]{9}}\left[\gamma\left(1-\frac{\beta}{3}\right)+\left(\alpha-\frac{\gamma^2}{6}\right)^2\right],\\
c_3&=&\frac{1}{324}\left[-108 c_1^3+324c_1 c_2+(8\sqrt[3]{1089}-36\sqrt[3]{9})\gamma c_1+6\beta^2+18\beta-2\gamma^3-81\right].
\end{eqnarray*}
According to \cite{Wimp} other two formal series solutions can be obtained by letting $n\to ne^{2\pi k i}$ with $k=0,1,\cdots,\widetilde{\rho}-1$. In the present case we have 
$\widetilde{\rho}=3$ and the corresponding Birkhoff series are thus
\begin{eqnarray*}
w^{(2)}_n&\sim&\left(\frac{3e}{n}\right)^{n/3}n^{-\frac{5}{6}-\frac{\beta}{9}}e^{-\frac{2}{3}\pi i n-\frac{1}{2}(1-i\sqrt{3})\gamma\sqrt[3]{\frac{n}{3}}}
\left[1-\frac{(1+i\sqrt{3})c_1}{2n^{1/3}}-\frac{(1-i\sqrt{3})c_2}{2n^{2/3}}+\frac{c_3}{n}+\mathcal{O}(n^{-4/3})\right],\\
w^{(3)}_n&\sim&\left(\frac{3e}{n}\right)^{n/3}n^{-\frac{5}{6}-\frac{\beta}{9}}e^{-\frac{4}{3}\pi i n-\frac{1}{2}(1+i\sqrt{3})\gamma\sqrt[3]{\frac{n}{3}}}
\left[1-\frac{(1-i\sqrt{3})c_1}{2n^{1/3}}-\frac{(1+i\sqrt{3})c_2}{2n^{2/3}}+\frac{c_3}{n}+\mathcal{O}(n^{-4/3})\right].
\end{eqnarray*}
Last but not least, note that the above asymptotic results for $w_n^{(i)}$ with $i=1,2,3$ tend to zero for $n\to\infty$ as we would expect from Theorem~\ref{TM}.

\section{Analysis of the triconfluent Heun equation using supersymmetric quantum mechanics}
For the following analysis it is convenient to rewrite the triconfluent Heun equation (\ref{TCHeq}) as $(L_T v)(\rho)=0$ where
\[
L_T=-\frac{d^2}{d\rho^2}+\Omega(\rho),\quad\Omega(\rho)=-A_0-A_1\rho-A_2\rho^2+\frac{9}{4}\rho^4.
\]
We look for a factorization of the operator $L_T$ having the form 
\[
L_T=AA^\dagger=\left(\frac{d}{d\rho}+W(\rho)\right)\left(-\frac{d}{d\rho}+W(\rho)\right).
\]
This will be the case if the unknown function $W$ satisfies the Riccati equation
\[
\frac{dW}{d\rho}+W^2(\rho)=\Omega(\rho).
\]
We solved the above equation with the help of the software package Maple~$18$ and we found that the superpotential $W$ is given by
\[
W(\rho)=\frac{1}{6}\frac{e^{f(\rho)}F(\alpha,\beta,\gamma;\rho)-c_1 e^{-f(\rho)}F(\alpha,-\beta,\gamma;-\rho)}{e^{f(\rho)}
T_1(\alpha,\beta,\gamma;\rho)+c_1 e^{-f(\rho)}T_1(\alpha,-\beta,\gamma;-\rho)}
\]
where $c_1$ is an arbitrary integration constant and 
\[
f(\rho)=-\frac{\gamma}{2}\rho-\frac{\rho^3}{2},\quad
F(\alpha,\beta,\gamma;\rho)=-3(\gamma+3\rho^2)T_1(\alpha,\beta,\gamma;\rho)+6\dot{T}_1(\alpha,\beta,\gamma;\rho).
\]
with $\alpha$, $\beta$, and $\gamma$ as in (\ref{abc}). Note that as a byproduct result we also obtained a new factorization of the triconfluent Heun equation since it does not coincide with the one offered in \cite{Ronveaux}.
\begin{figure}\label{FW}
\includegraphics[scale=0.4]{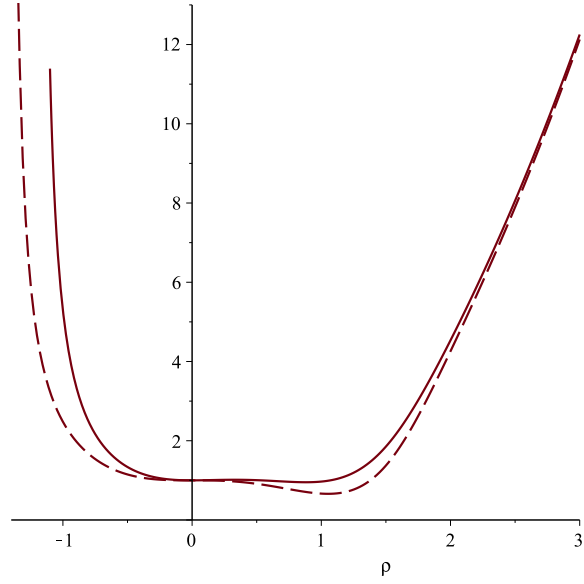}
\caption{\label{figW}
Behaviour of the superpotential $W$ for different values of the parameters: solid line corresponds to $\alpha=0=c_1$, $\beta=-1$, $\gamma=-2$; dash line $\alpha=0=c_1=\beta$, $\gamma=-2$.
}
\end{figure}

Combining the operators $A$ and $A^\dagger$ we can construct Hamiltonians
\[
\mathcal{H}_{-}=A^\dagger A=-\frac{d^2}{d\rho^2}+V_{-}(\rho),\quad 
\mathcal{H}_{+}=AA^\dagger=L_T=-\frac{d^2}{d\rho^2}+V_{+}(\rho)
\]
where $V_{+}$ and $V_{-}$ are the supersymmetric partner potentials and they are given by
\[
V_{\pm}(\rho)=W^2(\rho)\pm\frac{dW}{d\rho}.
\]
Clearly, $V_{+}(\rho)=\Omega(\rho)$ and moreover the other partner potential can be expressed as
\[
V_{-}(\rho)=2W^2(\rho)-\Omega(\rho)=\Omega(\rho)-2\frac{dW}{d\rho}.
\]
\begin{figure}\label{FW2}
\includegraphics[scale=0.4]{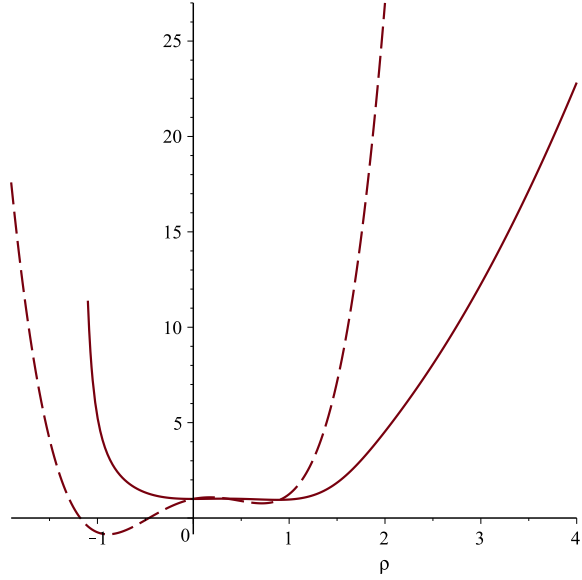}
\caption{\label{figW2}
Behaviour of the partner potentials $V_{-}$ (solid line) and $V_{+}$ (dash line) for $\alpha=0$, $\beta=-1$, $\gamma=-2$, $c_1=0$.
}
\end{figure}
If we denote the eigenstates of the Hamiltonians $\mathcal{H}_{-}$ and $\mathcal{H}_{+}$ by $v_n^{(-)}$ and $v_n^{(+)}$, respectively, then $v_n^{(\pm)}$ must satisfy the eigenvalue equations
\[
\mathcal{H}_{-}v_n^{(-)}=E^{(-)}_n v_n^{(-)},\quad \mathcal{H}_{+}v_n^{(+)}=E^{(+)}_n v_n^{(+)}.
\]
In order to establish whether or not the supersymmetry is unbroken we need to find out if the ground state of $\mathcal{H}_{-}$ has zero energy, that is $E^{(-)}_0=0$.

\section{Analysis of the zero energy state}
Let $\psi_0$ denote the zero energy solution, i.e. $E=0$, associated to an Hamiltonian $H_0$ such that $H_0\psi_0=0$ and 
$H_0=d^2/dr^2+V_{eff}(r)$ where the effective potential has been already computed in Section~\ref{red}. For the potential $V_3$ with 
$e_0=e_1=e_2=0$ corresponding to the choice $a_0=a_1=a_2=0$ we find that the zero energy solution can be written as a linear combination of Bessel functions as follows
\[
\psi_0(r)=\sqrt{r}\left[C_1 J_{\sqrt{7}/6}\left(\frac{2\sqrt{e_4}}{3}r^{3/2}\right)+
C_2 Y_{\sqrt{7}/6}\left(\frac{2\sqrt{e_4}}{3}r^{3/2}\right)\right]
\]  
with integration constants $C_1$ and $C_2$. This solution is not square integrable since it oscillates asymptotically as a plane wave. 
Therefore, it is obvious that we are handling here the scattering solution for a spacial case of the energy.
Note that from the definition of $e_4$ we can conclude that it is always strictly positive.  Another case that can be solved exactly is the following. Consider the potential $V_4$ and let $a_0=a_2=b_0=0$. Then, the coefficients $v_1$, $v_3$, and $v_4$ vanishes whereas $v_0=-a1/b_1$. Hence, we have
\[
V_4(r)=-\frac{5}{36 r^2}+v_5 r^2+v_0
\] 
with $v_5>0$. For the above potential the Schr\"{o}dinger equation
\[
\frac{d^2\psi}{dr^2}+V_4(r)\psi(r)=E\psi(r)
\]
can be exactly solved in terms of a linear combination of Whittaker functions given by
\[
\psi_E(r)=\frac{1}{\sqrt{r}}\left[C_1 M_{\alpha,\beta}(i\sqrt{v_5}r^2)+C_2 W_{\alpha,\beta}(i\sqrt{v_5}r^2)\right]
\]
with
\[
\alpha=\frac{i(E-v_0)}{4\sqrt{5}},\quad\beta=\frac{\sqrt{14}}{12}.
\]
Observe that if $E=v_0$ or $E=0$ and $v_0=0$ we find that 
\[
\psi_{E=v_0}(r)=\sqrt{r}\left[C_1 J_{\beta}\left(\frac{\sqrt{v_5}}{2}r^2\right)+
C_2 Y_{\beta}\left(\frac{\sqrt{v_5}}{2}r^2\right)\right].
\]


\begin{thebibliography}{999}
\bibitem{Fl}
S. Fl\"{u}gge, {\it{Practical Quantum Mechanics}}, Springer-Verlag: Berlin, Heidelberg, New York, 1999
\bibitem{Derezinski} J. Derezi\'{n}ski and M. Wrochna, ``Exactly Solvable Schr\"odinger Operators'', Ann. Henri Poincar\'e {\bf 12} (2011) 397 and references therein\bibitem{Lam}
A. Lamieux and A. K. Bose, "Construction de potentiels pour lesquels l'equation de Schr\"{o}dinger est soluble", Annales de l'Inst. H. Poincar\'e {\bf X} (1969) 259
\bibitem{Natanzon}
G. A. Natanzon, ``Study of the one-dimensional Schr\"{o}dinger equation generated from the hypergeometric equation'', Vestn. Leningr. Univ. {\bf{10}} (1971) 22
\bibitem{Natanzon1}
G. Natanson, ``Heun-Polynomial Representation of Regular-at-Infinity Solutions for the Basic SUSY Ladder of Hyperbolic P\"oschl-Teller Potentials Starting from the Reflectionless Symmetric Potential Well”, arXiv:1410.1515 (2014)
\bibitem{Davide2013}
D. Batic, R. Williams and M. Nowakowski, ``Potentials of the Heun class'', J. Phys. A: Math. Theor. {\bf{46}} (2013) 245204
\bibitem{Milson}
R. Milson, ``Liouville Transformation and Exactly Solvable Schrodinger Equations'', Int. J. Theor. Phys. {\bf{37}} (1998) 1735
\bibitem{messiah}
A. Messiah, {\it{Quantum Mechanics}}, North Holland Publishing Company, 1967
\bibitem{Zhang}
S. C. Zhang, T. H. Hansen and S. Kivelson, ``Effective-Field-Theory Model for the Fractional Quantum Hall Effect'', Phys. Rev. Lett. {\bf{62}} (1989) 82
\bibitem{Fetter}
A. Fetter, C. Hanna and R. Laughlin, ``Random-phase approximation in the fractional-statistics gas'', Phys. Rev. B {\bf{39}} (1989) 9679
\bibitem{Wiegmann}
P. B. Wiegmann, ``Superconductivity in strongly correlated electronic systems and confinement versus deconfinement phenomenon'',  Phys. Rev. Lett. {\bf{60}} (1988) 821
\bibitem{Polyakov}
A. M. Polyakov", ``Fermi-Bose transmutations induced by gauge fields'', Mod. Phys. Lett. A {\bf{3}} (1988) 325
\bibitem{Lerda}
A. Lerda, {\it{Anyons: Quantum Mechanics of Particles with Fractional Statistics}}, Springer Verlag, 1992
\bibitem{Manuel}
C. Manuel and R. Tarrach, ``Contact Interactions of Anyons'', Phys. Lett. B {\bf{268}} (1991) 222
\bibitem{Roy}
B. Roy, A. O. Barut and P. Roy, ``On the dynamical group of the system of two anyons with Coulomb interaction'', Phys. Lett. A {\bf{172}} (1993) 316
\bibitem{Ronveaux}
A. Ronveaux, {\it{Heun's Differential Equations}}, Oxford University Press, 1995
\bibitem{Hioe}
F. Hioe, D. MacMillen and E. Montroll, ``Quantum theory of anharmonic oscillators. II. Energy levels of oscillators with $x^{2\alpha}$  anharmonicity'', J. Math. Phys. {\bf{17}} (1976) 1320
\bibitem{Liang}
J. Q. Liang and H. J. W. M\"{u}ller-Kirsten, ``Anharmonic Oscillator Equations: Treatment Parallel to Mathieu Equation'', Transactions of the IRE Professional Group (2004), arXiv:quant-ph/0407235
\bibitem{Schulze}
A. Schulze-Halberg, ``Quasi-Exactly Solvable Singular Fractional Power Potentials Emerging from the Triconfluent Heun Equation'', Phys. Scr. {\bf{65}} (2002) 373
\bibitem{Bay}
K. Bay, W. Lay and A. Akopyan, ``Avoided Crossings of the quartic oscillator'', J. Phys. A: Math. Gen. {\bf{30}} (1997) 3057
\bibitem{ush}
A. Ushveridze, {\it{Quasi-exactly solvable models in quantum mechanics}}, Inst. of Physics Publ., Bristol, 1994
\bibitem{Bose}
A. K. Bose, ``A Class of Solvable Potentials'', Nuovo Cim. {\bf{32}} (1964) 679
\bibitem{Gradsh}
I. S. Gradshteyn and I. M. Ryzbik, {\it{Table of Integrals, Series, and Products}}, Academic Press, 2007
\bibitem{Yang}
L. Yang, ``Recent Advances on Determining the Number of Real Roots of Parametric Polynomials'', J. Symb. Comp. {\bf{28}} (1998) 225
\bibitem{Eremenko1}
A. Eremenko, A. Gabrielov and B. Shapiro, High-energy eigenfunctions of one-dimensional Schr\"{o}dinger operators with polynomial potentials, Comp. Meth. Function Theory {\bf{8}} (2008) 513
\bibitem{Berezin}
F. A. Berezin and M. A. Shubin, {\it{The Schr\"{o}dinger equation}}, Kluwer, Dordrecht, 1991
\bibitem{Sibuya}
Y. Sibuya, {\it{Global theory of a second order linear ordinary differential equation with a polynomial coefficient}}, North-Holland Publishing Co., Amsterdam-Oxford,1975
\bibitem{Eremenko2}
A. Eremenko, A. Gabrielov and B. Shapiro, Zeroes of eigenfunctions of some anharmonic oscillators, Ann. Inst. Fourier, Grenoble,{\bf{58}} (2008) 603
\bibitem{Elaydi} 
S. Elaydi, {\it{An Introduction to Difference Equations}}, Springer Verlag, 2000
\bibitem{Decarreau}
A. Decarreau, P. Maroni and A. Robert, ``Sur les equations confluentes de l'equation de Heun'', Ann. Soc. Sci. Brux. {\bf{92}} (1978) 151
\bibitem{Birkhoff1}
G. D. Birkhoff, `` Formal theory of irregular linear difference equations'', Acta Math. {\bf{54}} (1930) 205
\bibitem{Birkhoff2}
G. D. Birkhoff and W. J. Trjitzinsky, ``Analytic theory of singular difference equations'', Acta Math. {\bf{60}} (1932) 1
\bibitem{Wimp}
J. Wimp, {\it{Computation with Recurrence Relations}}, Pitman Press, 1984
\end{thebibliography}
\end{document}